%
\documentclass[12pt,reqno]{amsart}
\usepackage{amsfonts, amsmath, amssymb, amsthm,bm,bbm}
\usepackage{bm}
\usepackage{color}
\usepackage{stmaryrd}
\usepackage{graphicx}
\usepackage{tcolorbox}
\tcbuselibrary{breakable}
\usepackage[pagebackref, colorlinks = true, linkcolor = blue, urlcolor  = blue, citecolor = red]{hyperref}
\usepackage[margin=1in, marginparwidth=1.5cm]{geometry}
\usepackage{appendix}



\DeclareMathOperator{\Tr}{Tr}


\newcommand{\nocontentsline}[3]{}
\newcommand{\tocless}[2]{\bgroup\let\addcontentsline=\nocontentsline#1{#2}\egroup}

\newcommand{\be}{\begin{equation}}
\newcommand{\ee}{\end{equation}}
\newcommand{\bea}{\begin{eqnarray}}
\newcommand{\eea}{\end{eqnarray}}

\newcommand{\cM}{\mathcal{M}}

\newcommand{\cP}{\mathcal{P}}



\newcommand{\bM}{\mathbb{M}}

\newcommand{\bE}{\mathbb{E}}
\newcommand{\bC}{\mathbb{C}}
\newcommand{\bN}{\mathbb{N}}

\newcommand{\cH}{\mathcal{H}}

\newcommand{\un}{\mathbbm{1}}

\newtheorem{theorem}{Theorem}[section]

\newtheorem{lemma}[theorem]{Lemma}
\newtheorem{definition}[theorem]{Definition}

\newtheorem*{theorem*}{Theorem}
\newtheorem*{theo31*}{Theorem \ref{thm:moments-fixed-N} in the maps formulation}

\newtheorem{proposition}[theorem]{Proposition}
\newtheorem{remark}[theorem]{Remark}

\begin{document}

\title[Asymptotic resolvents of a product of marginals of a random tensor]{Asymptotic resolvents of a product of two marginals of a random tensor}

\author{{Stephane~Dartois}}\email{stephane.dartois@unimelb.edu.au}
\address{School of Mathematics and Statistics, 
University of Melbourne, Victoria 3010, Australia}
\maketitle
\begin{abstract}
    Random tensors can be used to produce random matrices. This idea is, for instance, very natural when one studies random quantum states with the aim of exploring properties that are generically true, or true with some probability. We hereby study the moments generating function, in the sense of the Stieltjes transform - \textit{i.e.} the resolvent -, of a random matrix defined as a product of two different marginals of the same random tensor. We study the resolvent in two different asymptotical regimes.
    In the first regime, the resolvent is easily computed thanks to freeness results of the two different marginals obtained in \cite{DLN} and straightforward application of free harmonic analysis. In the second regime, we show that the resolvent satisfies an algebraic equation of degree six. This algebraic equation possesses roots whose expressions can be given explicitly in terms of radicals. We obtain this result by using an enumerative combinatorics approach. One of the interesting aspects of the second regime is that the corresponding probability density function interpolates between the square of a Mar\u cenko-Pastur and the free multiplicative square of a Mar\u cenko-Pastur law.
\end{abstract}
\vspace{3cm}
\noindent{\bf Keywords:} random quantum states, random tensors, random matrix products, free probability, resolvent, combinatorial maps, enumeration, marginals
\section{Introduction}
We hereby address a question triggered by \cite{DLN}. In this work the authors consider the properties of the moments of the products of different marginals of a random tensor in a variety of asymptotical regimes. Their study is motivated by quantum information questions as well as by the beautiful underlying combinatorics. One of the question they introduce is the following. Consider a random tensor $X\in \mathcal{H}_A\otimes \mathcal{H}_B\otimes \mathcal{H}_C\otimes \mathcal{H}_D$ with Gaussian entries, where $\mathcal{H}_A, \mathcal{H}_B, \mathcal{H}_C, \mathcal{H}_D$ are seen as Hilbert spaces of quantum systems. What is the behavior of the different marginals of $X$ seen as random matrices? In their paper the authors state different freeness results in different regimes, namely when the dimensions of all the Hilbert spaces $\mathcal{H}_B, \mathcal{H}_C$ are sent to infinity or when the dimensions of $\mathcal{H}_A, \mathcal{H}_D$ go to infinity and the dimensions of $\mathcal{H}_B, \mathcal{H}_C$ stay finite fixed to a value $m$. \\

In this letter we propose a method to compute the asymptotics of the resolvent of a random matrix $P$ made of a product of the two marginals of $X$, $$
V_{AB}=[\un \otimes \un \otimes\Tr\otimes\Tr](X\otimes X^{\vee}), \ V_{AC}=[\un \otimes \Tr \otimes\un\otimes\Tr](X\otimes X^\vee),$$ where $\textrm{dim }\mathcal{H}_B = \textrm{dim } \mathcal{H}_C =m$. Our main results, Theorem \ref{thm:main-thm}, states that the resolvent satisfies an (explicit) algebraic equation in the asymptotical regime where both $\textrm{dim }\mathcal{H}_A = N_A, \ \textrm{dim } \mathcal{H}_D =N_D$ are sent to infinity in a correlated way while $\textrm{dim }\mathcal{H}_B = \textrm{dim } \mathcal{H}_C =m$ is set to be fixed, finite. Interestingly, as was already noticed in the work \cite{DLN}, the parameter $m$ can be used to interpolate between an eigenvalue density that is the square of Mar\u cenko-Pastur ($m=1$) and the free multiplicative square of Mar\u cenko-Pastur ($m=\infty$). However, in \cite{DLN}, no analytics results were given for general values of $m$. The general values of $m$ were only explored numerically. In particular, it is obvious from our work that one can extend the result to non integer values of $m$.\\

This letter adds new interesting results to the literature on products of random matrices. That is of importance as products of random matrices are ubiquitous in a number of contemporary fields \cite{Jesper-thesis}, for example communication engineering \cite{TV04}, the analysis of algorithms \cite{Tr15}, and deep learning \cite{PW17}. The study of products of random matrices possesses many interesting combinatorial aspects. This is already transparent in the work \cite{DLN} where, for instance, discrete ramified coverings and projections appear naturally as a way to express the moments of marginals of random tensors. However it is a general feature which appears for instance in the (far from exhaustive) list of works \cite{DR-NarayanaWishart, lenczewski2013multivariate,Dubach-Peled}. In these works one can find relations to gluings of polygons and discretizations of Riemann surfaces with various local constraints, generalizations of Catalan and Fuss-Catalan numbers and of Narayana statistics.
The random matrices considered here are rather exotic as they come as marginals of the same random tensor. We here tell a combinatorial story, and we use combinatorial decomposition techniques to obtain the equation satisfied by the asymptotic of the resolvent. This approach seems to be best suited in this case due to the tensorial origin of the random matrices. Moreover, the combinatorics of this second asymptotical regime is both interesting (leading to a non-trivial spectral curve) and simple enough to be unraveled completely. \\
This problem also fits in a larger collection of works on random tensors and their applications. Indeed, random tensors and tensor models have been used to devise quantum gravity models \cite{delporte2018tensor, gurau2012complete}. A large amount of random matrices applications to quantum information can be seen as applications of random tensors to quantum information, \cite{DLN, collins2016random}. Finally, one for instance finds applications to turbulence \cite{dartois2018melonic} and machine learning/data analysis, \cite{arous2017landscape, arous2018algorithmic}. This makes new technical results on random tensors related problems potentially far reaching.\\

In order to supplement this work, interesting future prospects could involve: 1. Obtaining results on fluctuations at macroscopic scales, that is linear statistics computations. 2. Precise quantitative results for general marginals of random tensors in $n$-partite Hilbert spaces. 3. Generalization of the current product random matrices problem to a matrix model problem, that, if the corresponding combinatorics stays stable, would relate to vertex models on some random lattices/hypermaps. \\

\noindent \textbf{Organization of the paper:} The paper is organized as follows. In the first part, section \ref{sec:balanced-reminder}, we recall the results of \cite{DLN} in the balanced asymptotical regime. In a second part, section \ref{sec:unbalanced}, we introduce the necessary combinatorics to obtain our main result and then derive this results using a relevant combinatorial decomposition. The main theorem \ref{thm:main-thm} is a direct consequence of proposition \ref{prop:B-equations} and lemma \ref{lem:S-GF-equations} which are simple consequences of this combinatorial decomposition.
\section*{Acknowledgments}
I would like to thank Adrian Tanas\u a for an invitation to the LaBRI in July 2018. I would also like to thank Jean-Fran\c{c}ois Marckert for a discussion that has led to this letter. This work was supported by the Australian Research Council grant DP170102028. 
\section{The balanced regime}\label{sec:balanced-reminder}
\subsection{Freeness}
In the balanced regime studied in \cite{DLN} one considers marginals of a random quantum state $X\in \mathcal{H}_A\otimes \mathcal{H}_B\otimes \mathcal{H}_C\otimes \mathcal{H}_D$ such that $\dim \mathcal{H}_A=N_A$, $\dim \mathcal{H}_B=\dim \mathcal{H}_C=N$ and $\dim \mathcal{H}_D=N_D$. The balanced regime then corresponds to the regime where $N\rightarrow \infty$ and $N_A, N_D$ are arbitrary functions of $N$ such that $N_D(N)/N_A(N)\sim_{\infty}c>0$. Notice in particular that it is not necessary for $N_D$ and $N_A$ to grow to infinity while $N\rightarrow \infty$. The two type of marginals are 
\begin{equation}
V_{AB}=[\un \otimes \un \otimes\Tr\otimes\Tr](X\otimes X^\vee), \ V_{AC}=[\un \otimes \Tr \otimes\un\otimes\Tr](X\otimes X^\vee),
\end{equation}
where $X$ is a Gaussian complex tensor; see \cite{DLN} for details. Relying on the results of \cite{DLN}, we have the following theorem 
\begin{theorem}[D., Lionni, Nechita]\label{thm:freeness-thm}
Let $X\in \mathcal{H}_A\otimes \mathcal{H}_B\otimes \mathcal{H}_C\otimes \mathcal{H}_D$ be a sequence of random Gaussian tensors, where $N_{A,D}$ are arbitrary functions of $N$ satisfying $N_D\sim_\infty cN_{A}$ as $N\rightarrow \infty$, for some $c\in (0,\infty)$. Then the normalized marginals $(N_A^{-1}N^{-1}V_{AB}, N_{A}^{-1}N^{-1}V_{AC})$ converge in distribution, as $N\rightarrow \infty$, to a pair of identically distributed and free elements $(x_{AB}, x_{AC})$, where $x_{AB}$ and $x_{AC}$ have a $\mathrm{MP}_c$ distribution.
\end{theorem}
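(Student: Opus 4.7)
The plan is to prove the theorem by the moment method. I would embed both marginals in the common space $\cH_A\otimes\cH_B\otimes\cH_C$ as $\widetilde V_{AB} = V_{AB}\otimes\un_{\cH_C}$ and $\widetilde V_{AC} = \un_{\cH_B}\otimes V_{AC}$, equipped with the normalised trace, and then show that the mixed moments of $\widetilde x_{AB}^{(N)} = (N_AN)^{-1}\widetilde V_{AB}$ and $\widetilde x_{AC}^{(N)} = (N_AN)^{-1}\widetilde V_{AC}$ converge to the corresponding mixed moments of a free pair $(x_{AB}, x_{AC})$ of $\mathrm{MP}_c$ elements. Since the Marchenko--Pastur moments are themselves generated by planar non-crossing pair partitions, verifying the joint limit simultaneously identifies the marginal laws and the freeness.

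First, I would apply the Gaussian Wick theorem: any mixed moment $\bE[\tr(\widetilde V_{\epsilon_1}\widetilde V_{\epsilon_2}\cdots \widetilde V_{\epsilon_r})]$ with $\epsilon_i\in\{AB,AC\}$ expands as a sum over pair matchings between the copies of $X$ and $X^\vee$. Following the formalism of \cite{DLN}, each pairing encodes a coloured combinatorial map whose vertices are the tensor copies and whose edges implement the partial traces on $A,B,C,D$ together with the Wick contractions. Evaluating a map yields a monomial $N_A^{\alpha}N^{\beta}N_D^{\delta}$ in which the exponents count faces within the corresponding colour classes. Using $N_D\sim c N_A$ together with $\dim\cH_B=\dim\cH_C=N$, the total weight takes the form $N^{\chi}\cdot c^{\delta}\cdot(1+o(1))$ for some integer $\chi$ attached to the map.

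The core step is the topological analysis: one bounds $\chi$ linearly in the number of pairings via an Euler-characteristic argument, with equality forcing each colour submap to be planar. After normalisation by $(N_A N)^{-r}$, only the planar contributions survive, and they are naturally indexed by non-crossing pair partitions of the letter word $\epsilon_1\epsilon_2\cdots\epsilon_r$, stratified by the decomposition of the necklace into maximal monochromatic blocks. Each block contributes an independent $\mathrm{MP}_c$ cumulant in the variable $c$, while the alternation between $AB$ and $AC$ labels is recorded by a non-crossing decomposition that is precisely the combinatorial shape expressing the free product of two identical $\mathrm{MP}_c$ laws.

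The main obstacle is showing that no ``mixed'' pairing linking an $X$ (or $X^\vee$) inside a $V_{AB}$ factor to one inside a $V_{AC}$ factor achieves weight-maximality, despite the fact that such couplings are natural in the Wick expansion because both marginals are built from the very same tensor $X$. This statistical correlation is exactly what could a priori destroy freeness; the face-counting argument shows it is washed out in the large-$N$ limit, and is the real content of the theorem beyond the marginal MP convergence.
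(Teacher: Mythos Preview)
The paper does not actually prove this theorem: it is stated as a result of \cite{DLN} (hence the attribution ``D., Lionni, Nechita'') and is only recalled here, with the surrounding text ``Relying on the results of \cite{DLN}, we have the following theorem.'' There is therefore no proof in the paper to compare against.

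That said, your sketch is broadly in line with the combinatorial-map framework of \cite{DLN} that the present paper uses throughout: Wick expansion, encoding of contractions as coloured maps, weights given by face counts in each colour, and an Euler-characteristic bound selecting planar maps at leading order. The identification of the leading contributions with non-crossing partitions compatible with the word $\epsilon_1\cdots\epsilon_r$, and the observation that mixed $AB$--$AC$ contractions are subleading because they spoil planarity in the $B$ or $C$ colour, is exactly the mechanism behind freeness here. As a sketch it is sound; the parts you have left implicit (the precise exponent bookkeeping showing that a mixed contraction loses at least one power of $N$, and the verification that the surviving planar terms reproduce the free-cumulant structure of two $\mathrm{MP}_c$ elements rather than merely some factorised moments) are where the actual work lies, and you would need to make those explicit to have a proof rather than a plan.
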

\subsection{ The resolvent at large $N$}
In what follows, we are interested in the generating function of moments of the random matrix\footnote{where the matrix product law is induced by the identification of $\cH_B$ with $\cH_C$.} $P=V_{AB}^{1/2} V_{AC} V_{AB}^{1/2}$, that is the Stieljtes transform of its eigenvalue density. We call this generating function the resolvent of $P$. More precisely we are interested in its large $N$ limit. That is, we want to compute
\begin{equation}
    W(z)=\lim_{\substack{N\rightarrow \infty\\ N_D(N)\sim_{\infty}c N_A(N)}}\sum_{n\ge 0}z^{-n-1}\frac{\mathbb{E}(\Tr(P^n))}{(N_A N)^{2n+1}}
\end{equation}\\
In this regime the resolvent can be computed in many ways. In particular, theorem \ref{thm:freeness-thm}, allows to compute it \textit{via} free probabilistic means as it implies that the large $N$ normalized\footnote{that is $\int \textrm{d}\rho_{\infty}=1$.} density of eigenvalues $\mathrm{d}\rho_{\infty}(x)$ of the matrix $P$ is the free multiplicative square of a Mar\u cenko-Pastur law of parameter $c$, that is $\mathrm{d}\rho_{\infty}(x)=\mathrm{d}\mathrm{MP}_c^{\boxtimes 2}(x)$.
The large $N$ resolvent can be computed as
\begin{equation}
    W(z)=\int_{-\infty}^{\infty}\frac{\mathrm{d}\rho_{\infty}(u)}{z-u}.
\end{equation}
\\

$W(z)$ can also be computed \textit{via} combinatorial arguments. Since the moments of the random matrix $P$ have a $1/N$ expansion (see \cite{DLN}), the corresponding resolvent also admits a $1/N$ expansion. The large $N$ limit of the moments is given by a specialization of the multi-variate Narayana statistics. This statement can be understood combinatorially thanks to a bijection of the graphs that describe the large $N$ limit moments, we refer to \cite{DLN} for a (very) short presentation of the bijection; see also \cite{lenczewski2013multivariate, DR-NarayanaWishart} for references relating the multi-variate Narayana statistics to the free probabilistic and random matrix context. In order to keep this letter succinct we refrain from presenting the bijective combinatorics proof. \\
Finally one can obtain the result of the proposition \ref{prop:balanced-resolvent} using an enumerative combinatorics technique similar to the one presented in the next section on the unbalanced case. In this balanced case the result is obtained more simply. We have
\begin{proposition}\label{prop:balanced-resolvent}
The large $N$ limit of the resolvent $W(z)$ satisfies the following algebraic equation
\begin{equation}
   z^2 W(z)^3 +2(c-1)zW(z)^2+\left((c-1)^2-z\right)W(z)+1=0.
\end{equation}
\end{proposition}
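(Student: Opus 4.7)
The natural route given Theorem \ref{thm:freeness-thm} is to obtain $W$ via the multiplicativity of the $S$-transform under free multiplicative convolution. Because the limits $x_{AB}, x_{AC}$ are free and positive, the matrix $P/(N_A N)^2$ has the same $\ast$-distribution as $x_{AB} x_{AC}$ in the limit, so $\rho_\infty = \mathrm{MP}_c \boxtimes \mathrm{MP}_c$ as indicated in the text preceding the statement.

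The first concrete step is to compute $S_{\mathrm{MP}_c}$. Starting from the quadratic equation satisfied by the Cauchy transform of $\mathrm{MP}_c$, namely $zG^2 - (z-c+1)G + 1 = 0$, I would substitute $G = (u+1)/z$ (so that $u = \psi_{\mathrm{MP}_c}(1/z) = zG(z) - 1$) and solve the resulting relation for $w = 1/z$ in terms of $u$. This gives
\[ \chi_{\mathrm{MP}_c}(u) = \frac{u}{(1+u)(u+c)}, \qquad S_{\mathrm{MP}_c}(u) = \frac{1+u}{u}\,\chi_{\mathrm{MP}_c}(u) = \frac{1}{u+c}. \]
Multiplicativity of the $S$-transform under $\boxtimes$ then yields $S_{\rho_\infty}(u) = (u+c)^{-2}$, hence
\[ \chi_{\rho_\infty}(u) = \frac{u}{(1+u)(u+c)^2}. \]

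The second step is the translation back to the resolvent. Using the standard identity $\psi_{\rho_\infty}(1/z) = zW(z) - 1$, I set $u = zW(z) - 1$; then the relation $1/z = \chi_{\rho_\infty}(u)$ reads
\[ \frac{1}{z} = \frac{zW - 1}{zW\,(zW + c - 1)^2}. \]
Cross-multiplying and dividing by $z$ gives $W(zW + c - 1)^2 = zW - 1$, and expanding the square produces
\[ z^2 W^3 + 2(c-1)\,zW^2 + \bigl((c-1)^2 - z\bigr) W + 1 = 0, \]
which is the claimed equation.

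The only nontrivial step is the computation of $S_{\mathrm{MP}_c} = (u+c)^{-1}$; once this is in hand everything else is formal algebraic manipulation, and one can even quote $S_{\mathrm{MP}_c}$ from standard references. An alternative purely combinatorial derivation, closer in spirit to the map enumeration carried out in Section \ref{sec:unbalanced} but considerably simpler, is also available and is alluded to in the paragraph preceding the statement; the $S$-transform route above is however the most economical use of the freeness already granted by Theorem \ref{thm:freeness-thm}.
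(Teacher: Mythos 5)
Your proof is correct and follows essentially the same route as the paper: compute $S_{\mathrm{MP}_c}(u)=(u+c)^{-1}$ from the quadratic equation for the Mar\u cenko--Pastur Cauchy transform, use multiplicativity of the $S$-transform under free multiplicative convolution to get $S_{\rho_\infty}(u)=(u+c)^{-2}$, and translate back to the resolvent via $\chi_{BC}(1/z)=zW(z)-1$. The only cosmetic difference is notational (your $\psi$ for the paper's $\chi$), and the paper spells out a couple of intermediate algebraic identities you fold into one line.
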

\begin{proof}
A proof is easily obtained \textit{via} free harmonic analysis. We give some details for readers who are not familiar with this technique here. From theorem \ref{thm:freeness-thm}, the normalized marginals converge in distribution to identically distributed free elements $x_{AB}, x_{AC}$. We use this fact to compute the algebraic equation satisfied by $W(z)$. Let us first define a few quantities. The moment generating functions 
\begin{equation}
    \chi_B(u)=\sum_{p\ge 1} u^p\varphi(x_{AB}^p), \quad
    \chi_C(u)=\sum_{p\ge 1} u^p\varphi(x_{AC}^p), \quad 
    \chi_{BC}(u)=\sum_{p\ge 1}u^p\varphi((x_{AC}x_{AB})^p),
\end{equation}
where $\varphi$ should be seen as the limit of the expectation of trace map $\bE \circ \Tr$. We also define the S-transforms
\begin{equation}
    S_{B}(t)=\frac{1+t}{t}\chi_B^{-1}(t), \quad S_{C}(t)=\frac{1+t}{t}\chi_C^{-1}(t), \quad S_{BC}(t)=\frac{1+t}{t}\chi_{BC}^{-1}(t),
\end{equation}
where the $\chi^{-1}(t)$ denote the reciprocal functions of the $\chi(t)$. Note that we have the relation
\begin{equation}\label{eq:W-chi-rel}
    W(z)=\frac1{z}+\frac{\chi_{BC}(1/z)}{z},
\end{equation}
due to the fact that $\lim_{N\rightarrow \infty}\frac{\mathbb{E}(\Tr(P^p))}{(N_A N)^{2p+1}}=\varphi((x_{AC}x_{AB})^p)$. Since $x_{AB}, x_{AC}$ are identically distributed,  $\chi_B(u)=\chi_C(u)$ and $S_{B}(t)=S_{C}(t)$, therefore we will denote $\chi(u)=\chi_B(u)=\chi_C(u)$. Moreover, we know\footnote{For instance, it can easily be obtained from \cite[Proof of Theorem 2.4]{DLN} from the relation between $G=W_{\textrm{MP}}$ and $\mathcal{R}$.} that the large $N$ limit of the resolvent\footnote{$\textrm{MP}$ stands for Mar\u cenko-Pastur.} $W_{\textrm{MP}}(z)$ of a Wishart matrix of rectangular parameter $c$ satisfies the algebraic equation
\begin{equation}
    z W_{\textrm{MP}}^2(z)+(c-z-1)W_{\textrm{MP}}(z)+1=0,
\end{equation}
this translates into an algebraic equation on $\chi(u)$ using the same relation \eqref{eq:W-chi-rel} for $W_{\textrm{MP}}(z)$ and $\chi(1/z)$
\begin{equation}
    u\chi(u)^2+((c+1)u-1)\chi(u)+ c u=0,
\end{equation}
which leads to an identity on $\chi^{-1}(t)$ by setting $u=\chi^{-1}(t)$ in the above equation, that is
\begin{equation}
    \chi^{-1}(t)=\frac{t}{t^2+(c+1)t+c}.
\end{equation}
Therefore, $S_{B}(t)=S_{C}(t)=\frac1{c+t}$. Since we have (see \cite[Chapter 3]{FreeRand}) $S_{BC}(t)=S_{B}(t)S_{C}(t)$ due to freeness of $x_{AB}, x_{AC}$ and $S_{BC}(t)=\frac{1+t}{t}\chi_{BC}^{-1}(t)$, we obtain an expression for $\chi_{BC}^{-1}(t)$
\begin{equation}
    \chi_{BC}^{-1}(t)=\frac{t}{(t+1) (c+t)^2},
\end{equation}
then leading to the following algebraic equation on $\chi_{BC}(u)$
\begin{equation}
    u (\chi_{BC}(u) +1) (c+\chi_{BC}(u) )^2-\chi_{BC}(u)=0.
\end{equation}
Finally we finish the proof by using the relation \eqref{eq:W-chi-rel} and setting $u=1/z$.
\end{proof}

\section{The unbalanced regime}\label{sec:unbalanced}
\subsection{Resolvent at large \texorpdfstring{$N_A$}{NA} and fixed \texorpdfstring{$N_B=N_C=m$}{NB=NC=m}}
In this section we are interested in the regime where $N_A$ is sent to infinity, while $N_D(N_A)\sim_{\infty} c N_A$ and $N_B=N_C=m$ are fixed to a constant value $m$. In this limit the moments can be written as a sum over combinatorial maps which are planar and whose detailed properties are explained later in this section.  In this section, we define the resolvent as being the following generating function of moments
\begin{equation}
W(z)=\lim_{\substack{N_A\rightarrow \infty\\N_D(N_A)\sim_{\infty} c N_A}}\sum_{n\ge0} z^{-n-1}\frac{\mathbb{E}(\Tr(P^n))}{(m N_A)^{2n+1}}.
\end{equation}
We show the following theorem.
\begin{theorem}\label{thm:main-thm}
The resolvent $W(z)$ satisfies the following algebraic equation,
\begin{multline}\label{eq:W-equation}
    W(z)^6 \left(y^4 z^4-2 y^2 z^4+z^4\right)+W(z)^5 \left(3 c y^4 z^3-6 c y^2 z^3+3 c z^3-3 y^4 z^3+6 y^2 z^3-3 z^3\right)\\
    +W(z)^4 \bigl(3 c^2 y^4 z^2-6 c^2 y^2 z^2+3 c^2 z^2-6 c y^4
   z^2+12 c y^2 z^2-6 c z^2+3 y^4 z^2-2 y^2 z^3-6 y^2 z^2\\
   -2 z^3+3 z^2\bigr)+W(z)^3 \bigl(c^3 y^4 z-2 c^3 y^2 z+c^3 z-3 c^2 y^4 z+6 c^2 y^2 z-3 c^2 z+3 c y^4 z-4 c y^2 z^2\\
   -6 c y^2 z-4 c
   z^2+3 c z-y^4 z+4 y^2 z^2+2 y^2 z+4 z^2-z\bigr)+W(z)^2 \bigl(-2 c^2 y^2 z-3 c^2 z+5 c y^2 z\\
   +5 c z -3 y^2 z+z^2-2 z\bigr)+W(z) \left(-c^3+c^2 y^2+2 c^2-2 c y^2+c
   z-c+y^2-z\right)-c=0.
\end{multline}
where $y=1/m$, and $m=\textrm{dim } \cH_B=\textrm{dim } \cH_C$ as introduced above and in \cite{DLN}.
\end{theorem}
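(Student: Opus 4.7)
My plan is to follow the enumerative route outlined in the paper, exploiting the fact that in the unbalanced regime $N_A\to\infty$ with $N_B=N_C=m$ fixed, the moments $\bE(\Tr(P^n))$ admit a topological expansion whose leading term sums over planar combinatorial maps with controlled face weights. The first step is to apply Wick's formula to the $2n$ copies of $X$ and $2n$ copies of $X^\vee$ appearing in $\Tr((V_{AB}V_{AC})^n)=\Tr(P^n)$, and to package each Gaussian pairing as a suitably coloured ribbon graph as in \cite{DLN}. At leading order in $N_A$ only planar maps contribute, while the faces associated to the Hilbert spaces $\cH_B$ and $\cH_C$ now carry finite weights, each factor of $N$ being replaced by $m$. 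The generating series (in $1/z$) of these weighted planar rooted maps is precisely $W(z)$.

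The heart of the argument is a combinatorial decomposition of such maps. Because the two marginals $V_{AB}$ and $V_{AC}$ alternate along the cyclic trace, a rooted planar map of the above kind carries a natural alternating structure near its root. Cutting at the root edge decomposes a general such map into an irreducible $B$-block, namely a sub-map that cannot itself be split non-trivially at a bridge preserving the alternation. Writing $B(z)$ for the generating function of these blocks with their face weights in $y=1/m$ and $c$, enumerating the (few) ways in which the marked faces can close yields a closed polynomial identity between $B$, $z$, $y$ and $c$; this is the content of Proposition \ref{prop:B-equations}.

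The second ingredient, Lemma \ref{lem:S-GF-equations}, introduces an auxiliary generating function $S(z)$ that records the way $B$-blocks are strung along the backbone of the map. Combining the ``sequence of blocks'' construction with the root decomposition expresses $W(z)$ as an explicit rational function of $S(z)$, $B(z)$, $z$, $y$ and $c$, in analogy with the relation \eqref{eq:W-chi-rel} of the balanced case. Together, Proposition \ref{prop:B-equations} and Lemma \ref{lem:S-GF-equations} form a polynomial system in $W$, $B$, $S$ over $\bC[z,y,c]$ of small total degree.

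The last step is to eliminate $B$ and $S$ by iterated resultants, producing a single polynomial identity for $W(z)$ with coefficients in $\bC[z,y,c]$; matching low-order moments, or checking that the balanced limit $y\to 0$ recovers Proposition \ref{prop:balanced-resolvent}, singles out the correct irreducible factor as the sextic \eqref{eq:W-equation}. I expect the main obstacle to be combinatorial rather than algebraic: correctly identifying the irreducible block and proving bijectivity of the decomposition, so that the equations of Proposition \ref{prop:B-equations} are simultaneously sound and exhaustive. The elimination step itself is routine symbolic algebra once the polynomial system is fixed, and consistency can be verified numerically from the first few moments computed directly via Wick's formula.
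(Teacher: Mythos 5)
Your high-level plan is the paper's: represent the leading-order moments as weighted planar bipartite maps with edge types, perform a root decomposition into blocks, write generating functions for blocks and for sequences of blocks, assemble a polynomial system, and eliminate. The route is right, but there is a concrete gap that would stop a direct execution. A single $B(z)$ and a single $S(z)$ cannot close, because the weight of a white vertex in this model is $c\,m^{-\mathrm{alt}(v_\circ)}$, where $\mathrm{alt}(v_\circ)$ counts type changes among the edges incident to $v_\circ$; this quantity is \emph{not} additive over sub-maps glued along a cut edge, since it depends on the types of the boundary edges. To track the $y=1/m$ exponent through the decomposition you must type the generating functions: the paper introduces a $2\times 2$ family $B_{ij}(x,y,c)$ indexed by the types $i,j\in\{0,1\}$ of the two external edges of a petal, and correspondingly $S_{ij}$ for sequences; the factor $y^2$ in the $B_{a\bar a}$ equation of Proposition~\ref{prop:B-equations} is exactly where the alternation count crosses the cut. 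Moreover, the paper's ``petal'' is not defined by bridge-irreducibility but by a precise complement-edge condition around the root (the pair $(e,e^*)$ cuts the plane, by the Jordan curve theorem), and the sequence summation is done with a $2\times 2$ transfer matrix built from the $B_{ij}$, diagonalizable thanks to the symmetries $B_{\bar a a}=B_{a\bar a}$ and $B_{\bar a\bar a}=B_{aa}$. Once the types are in place, the rest of your plan goes through: $W$ is recovered directly as $W(z)=z^{-1}S_{01}(1/\sqrt z,y,c)$ (the square root because each $1/z$ counts a pair of alternating-type edges), elimination by Gr\"obner bases or resultants yields a degree-$7$ polynomial in $S_{01}$, and the spurious factor $(c+S_{01})$ is discarded because it gives $W=-c/z$.

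One further caution about your proposed sanity check in the balanced limit: at $y\to 0$, $c\to 1$ the sextic does \emph{not} collapse to the Fuss--Catalan equation $z^2W^3-zW+1=0$ outright; the genus-two curve degenerates into the product $\left(W^3z^2-Wz-1\right)\left(W^3z^2-Wz+1\right)=0$, and only the second factor (the one with a solution analytic at $z=\infty$) is the Fuss--Catalan component. Matching low-order moments, as you also suggest, is the safer selection criterion.
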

 A nice feature of this equation is that one can find explicit roots in terms of radicals despite the fact that the equation is a polynomial of degree $6$ in $W(z)$. Moreover it also provides a recursive way to compute the moments. Indeed, it suffices to look for solutions analytic at infinity and to consider the induced equations on the coefficients of the expansion at infinity of these solutions. Let us illustrate this last fact by providing the first few moments, obtained using this method. Denoting $M_n=\frac{\mathbb{E}(\Tr(P^n))}{(m N_A)^{2n+1}}$, one has
 \begin{multline}
     \hspace{1.4cm} M_0=1, \ M_1=c^2+c y^2, \ M_2=c^4+4 c^3 y^2+2 c^3+2 c^2 y^4+4 c^2 y^2+c y^4,  \nonumber\\
     M_3= c^6+9 c^5 y^2+6 c^5+15 c^4 y^4+30 c^4 y^2+5 c^4+5 c^3 y^6+30 c^3 y^4+15 c^3 y^2+6 c^2 y^6+9 c^2 y^4+c y^6.
 \end{multline}\\
 
\noindent{\bf The combinatorial objects.}
The moments of order $k$ of the matrix $P$ can be expressed as a sum over bipartite labeled combinatorial maps with, one black vertex, up to $2k$ white vertices, $2k$ edges of two different types, type $0$ and type $1$ (denoted AB or AC, or B or C in \cite{DLN}). The two types of edge alternate around the black vertex. More formally, in terms of edge set and permutations (similar to \cite[Definition 2.9]{DLN}), these maps can be defined in the following way
\begin{definition}
The quadruple $\cM=(E,\sigma_{\bullet},\sigma_{\circ}, t)$ is called a \emph{combinatorial map with edge type} (combinatorial map or map for short). The set of edge $E$ is the set $E=\{1,2,3,\ldots, 2k\}$. The type map $t:E\rightarrow \{0,1\}$ sends edges of odd labels in $E_{o}=\{1,3,5,\ldots, 2k-1\}$ to $1$ giving them type $1$; while $t$ sends the edges of even labels $E=\{2,4,6,\ldots,2k\}$ to $0$, giving them type $0$. $\sigma_{\bullet}=(123\ldots 2k)$ is the full cycle permutation on $E$, while $\sigma_{\circ}$ is a permutation on $E$. The (unique) cycle of $\sigma_{\bullet}$ is the black vertex, while the cycles of $\sigma_{\circ}$ are the white vertices. The faces of $\cM$ are the cycles of $\sigma_{\bullet}\sigma_{\circ}$. We denote the set of such maps $\bM_k$; see Fig. \ref{fig:map-example-with-types} for an example. 
\end{definition}
\begin{figure}
    \centering
    \includegraphics{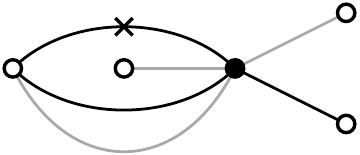}
    \caption{Example of a planar map $\cM$ with edge type. Edges of type $1$ are pictured in black while edges of type $0$ are pictured in dark gray. The edge of label $1$ is marked with a cross and edge labels increase counterclockwise around the black vertex. The corresponding permutation representation of the map is obtained by setting $\sigma_{\circ}=(2)(5)(6)(143)$ in complete cycle notation. The corresponding weight is $\textrm{wt}(\cM) = c^4 m^{-2}= c^4 y^2$.}
    \label{fig:map-example-with-types}
\end{figure}
Each map $\cM$ comes with a weight $\mathrm{wt}(\cM)$ and the moments of order $k$ of the matrix $P$ write as 
\begin{equation}
  \bE\left((N_AN)^{-2k-1}\Tr(P^k)\right) =  \sum_{\cM \in \bM_k}\mathrm{wt}(\cM).
\end{equation}
We do not give the general explicit weight function since we will only be interested in the weight of maps contributing to the large $N$ limit of the moments. Consequently we refer to \cite{DLN} for the general definition of the weight function.
In the large $N$ asymptotical regime, only planar maps contribute to the sum, that is the maps that satisfy the following constraint 
\begin{equation}\label{eq:planar-constraint}
    \#\sigma_{\circ}-2k+\#(\sigma_{\bullet}\sigma_{\circ})-1=0,
\end{equation}
where $\#\sigma_{\circ}$ is the number of cycles of $\sigma_{\circ}$ (\textit{i.e.} the number of white vertices of $\cM$), and $\#(\sigma_{\bullet}\sigma_{\circ})$ is the number of cycles of $\sigma_{\bullet}\sigma_{\circ}$ (\textit{i.e.} the number of faces of $\cM$). Note that we will denote $\bM_k^0$ the set of planar maps with $2k$ edges (that is satisfying the constraint of equation \eqref{eq:planar-constraint}). For these maps the weight function takes a simpler form thanks to \cite[Proposition 3.5 \& Theorem 3.16]{DLN}. We have,
\begin{equation}\label{eq:weight-fct}
    \mathrm{wt}(\cM)=\prod_{v_{\circ}\in \cM} c m^{-\mathrm{alt}(v_{\circ})},
\end{equation}
where $v_{\circ}$ runs among the white vertices of $\cM$, and $\mathrm{alt}(v_{\circ})$ was defined already in \cite{DLN}, and is the number of change of types of edge adjacent to $v_{\circ}$ when going around $v_{\circ}$. $m$ is a positive integer, and $c\in (0,\infty)$. As a consequence the resolvent $W(z)$ at large $N$ depends on both $m$ and $c$.\\

\noindent{\bf Petal decomposition.} We now introduce the combinatorial machinery needed to compute the generating function of the weighted planar combinatorial maps. First notice that the labeling of the edges of our combinatorial maps allows us to define a root. The root of our combinatorial maps is defined to be the edge $1 \in E$. Thus on Fig. \ref{fig:map-example-with-types}, the root edge is the edge marked with a cross. The petals $\cP$ of a planar map $\cM$ are defined thanks to the root edge as specific submaps of $\cM$. We define here the petals and the petal decomposition of a planar map. 
\begin{definition}
Let $\cM$ be a planar map in $\bM_k^0$. $\cM$ has one root edge. We define the petals of $\cM$ as follows. Let $e_{1}$ be the root edge of $\cM$. The complement edge $e_{1}^*$ of  $e_{1}$ is defined to be the edge next to $e_{1}$ when going around the white vertex counterclockwise adjacent to $e_{1}$. The petal $\cP_{1}$ associated to $e_{1}$ is the submap $\cP_{1}$ made of the black vertex, the edge $e_{1}^*$, and all the edges between $e_{1}$ and $e_{1}^*$ when turning around the black vertex counterclockwise as well as all the white vertices adjacent to these edges.  Consider the next edge $e_{2}$ counterclockwise around the black vertex after $e_{1}^*$. If $e_2\neq e_1$, we define its complement $e_{2}^*$ in the same way. The pair $(e_2,e_2^*)$ defines a petal $\cP_2$. Similarly we define petals $\cP_i$ for $i\in [\![ 1, q]\!]$ for some $q\le 2k$ such that $e_{q+1}=e_1$. The family of petals $\{\cP_i\}_{i=1}^q$ is the petal decomposition of $\cM$. Moreover, for each petal, we call external edges the pair of edge $(e_i,e_i^*)$.
\end{definition}
The existence and well-definedness of a petal decomposition for each planar maps is a simple consequence of the Jordan curve theorem that implies that the pair of external edge of a petal must separate two different regions of the plane.\\
Note that since petals are submaps, they are also maps. In particular, if we define the first external edge of a petal as its root, then a petal is a planar map\footnote{Since any submap of a planar map must be planar, \cite[Prop.~4.1.5, Sec.~4.1, p.~102]{GraphsOnSurfaces}.} such that the neighboring edge of the root edge counterclockwise around the white vertex adjacent to the root is also the neighboring edge of the root edge clockwise around the black vertex. We call $\bM^{0, \mathfrak P}$ the set of maps satisfying such constraint, while we denote  $\bM^{0, \mathfrak P}_n$ the set of such petals with $n$ edges. In particular, the weight of a petal is well defined and we can consider the generating series of weighted petals. This is what we do now.\\

\noindent{\bf Generatingfunctionology.} We use the petal decomposition to find the generating function of planar maps $\cM\in \bM^0$. To this aim we consider the generating functions of petals. There are four types of petals, depending on the type of their external edges. We denote $B_{ij}(x,y,c)$ the generating functions of the different types of petals for $i,j\in \{0,1\}$, where we set $y=1/m$, $x$ is the counting variable for the number of edges and $c$ counts the number of white vertices, we have
\begin{equation}
    B_{ij}(x,y,c)=\sum_{n\ge 1} x^n \sum_{\substack{\cP \in \bM^{0, \mathfrak P}_n \\ t(e)=i; t(\overline{e})=j}} \textrm{wt}(\cP),
\end{equation}
where $e$ and $\overline{e}$ denote the external edges of the petals.\\

Each planar map in $\bM^0$ can be seen as a finite sequence $(\cP_i)_{i=1}^q$ of petals, consequently the generating function of planar maps $A(x,y,c)=1+\sum_{n\ge 1} x^n \sum_{\substack{\cM \in \bM^0_n}} \textrm{wt}(\cM)$ can be written in terms of the generating functions of such sequences of petals. We denote these generating functions of sequences of petals $S_{ij}(x,y,c)$, where $i,j\in \{0,1\}$, and $j$ indicates the type of the first external edge of the first petal in the sequence, while $i$ indicates the type of the second external edge of the last petal in the sequence (sequences are read from right to left). Consequently the generating function of petals can be graphically represented as  \\
\begin{equation}
 S_{aa}(x,y,c)= \ \raisebox{-9.6mm}{\includegraphics[scale=0.8]{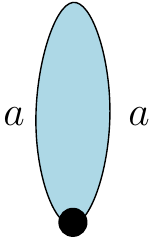}} \ = \sum_{n\ge 0} \sum_{\nu_i\in\{0,1\}_{i=1}^n} \ \raisebox{-9.6mm}{\includegraphics[scale=0.8]{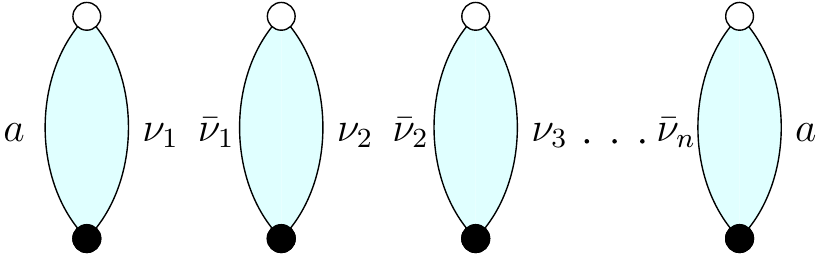}}
 \end{equation}
 \begin{equation}
 S_{\bar a a}(x,y,c)= \  \raisebox{-9.6mm}{\includegraphics[scale=0.8]{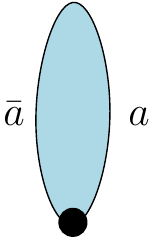}} \ =\raisebox{-0.5mm}{\includegraphics[scale=0.8]{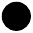}} \ +\sum_{n\ge 0}\sum_{\nu_i\in\{0,1\}_{i=1}^n} \ \raisebox{-9.6mm}{\includegraphics[scale=0.8]{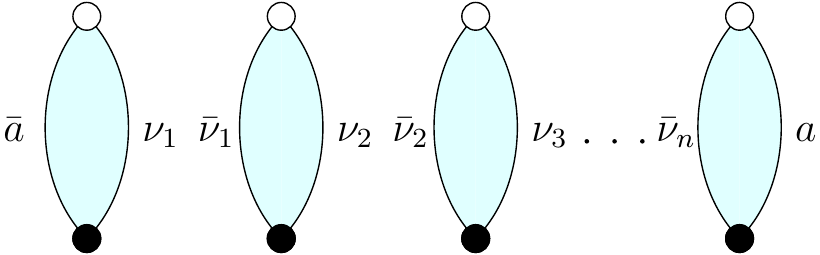}}\ ,
\end{equation}
where $a\in \{0,1\}$ and $\bar 0 = 1$ and $\bar 1 = 0$. More formally we have 
\begin{align}\label{}
  S_{a a}(x,y,c)&= B_{aa}(x,y,c)+\sum_{\nu_1\in \{0,1\} }B_{a\nu_1}(x,y,c)B_{\bar \nu_1 a}(x,y,c) +\ldots \nonumber\\
  \label{eq:S-GF-def1} &=\sum_{n\ge 0}\sum_{\nu_i\in\{0,1\}_{i=1}^n} B_{a\nu_1}(x,y,c)\left(\prod_{i=1}^{n-1} B_{\bar \nu_i\nu_{i+1}}(x,y,c)\right)B_{\bar \nu_n a}(x,y,c) \displaybreak[2]\\
  S_{\bar a a}(x,y,c)&=1+B_{\bar a a}(x,y,c)+\sum_{\nu_1\in \{0,1\} }B_{\bar a \nu_1}(x,y,c)B_{\bar \nu_1 a}(x,y,c) +\ldots \nonumber\displaybreak[2]\\
  \label{eq:S-GF-def2} &=1+\sum_{n\ge 0}\sum_{\nu_i\in\{0,1\}_{i=1}^n} B_{\bar a \nu_1}(x,y,c)\left(\prod_{i=1}^{n-1} B_{\bar \nu_i\nu_{i+1}}(x,y,c)\right)B_{\bar \nu_n a}(x,y,c),
\end{align}
where in the above sums the case $n=0$ corresponds to the one petal term, that is only one insertion of a $B$ generating function, while the term $n=1$ has only two factors, each being a $B$ generating function. In each of those cases the product in the summand is understood as trivial. \\
From the above arguments we have the following proposition
\begin{proposition}\label{prop:GFmoments-to-petal-sequences}
The generating function of planar maps $\cM\in \bM^0=\bigcup_{p\in \bN}\bM_p^0$ with the statistics induced by the weight function of \eqref{eq:weight-fct}, satisfies the following relation
\begin{equation}
    A(x,y,c)=S_{01}(x,y,c)=S_{10}(x,y,c).
\end{equation}
\end{proposition}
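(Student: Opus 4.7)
The plan is to establish the two equalities $A = S_{01}$ and $S_{01} = S_{10}$ separately: the first captures the combinatorial content of the petal decomposition, while the second is a formal type-swap symmetry of the generating functions.

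For $A = S_{01}$, I would show that the map $\cM \mapsto (\cP_1, \ldots, \cP_q)$ is a weight-preserving bijection between $\bM^0$ and the set of finite sequences of petals in $\bM^{0,\mathfrak P}$ whose external-edge types satisfy the alternation and endpoint conditions appearing on the right-hand side of \eqref{eq:S-GF-def2} with $a = 1$. Existence and uniqueness of the decomposition are immediate from its definition: $\cP_1$ is determined by the root $e_1$, and once $\cP_i$ is known, $\cP_{i+1}$ is determined by the edge immediately following $e_i^*$ around the black vertex, the process terminating precisely when $e_{q+1} = e_1$. The inverse glues a compatible sequence of petals cyclically around a common black vertex; planarity of the result is preserved because, as the paper notes via the Jordan curve theorem, each petal's pair of external edges isolates its interior from the rest of the map.

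For the type constraints, the key observation is that $\sigma_\bullet = (1, 2, \ldots, 2k)$ together with the definition of $t$ forces edges around the black vertex to alternate in type. Hence $e_1$ (label $1$) has type $1$, the final edge $e_q^*$ (label $2k$) has type $0$, and the types across successive petal boundaries alternate, which is precisely the condition encoded by the $\nu_i$ and $\bar \nu_i$ in \eqref{eq:S-GF-def2}. The weight factorises multiplicatively: the Jordan separation ensures that each white vertex of $\cM$ lies in a unique petal and has all its incident edges inside that petal, so $\mathrm{alt}_\cM(v_\circ) = \mathrm{alt}_{\cP_i}(v_\circ)$ and $\mathrm{wt}(\cM) = \prod_i \mathrm{wt}(\cP_i)$ in \eqref{eq:weight-fct}.

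For $S_{01} = S_{10}$, I would invoke the type-swap involution that exchanges types $0$ and $1$ on every edge of a petal. This operation preserves the underlying combinatorial structure, the edge count, and the weight \eqref{eq:weight-fct}, since $\mathrm{alt}(v_\circ)$ counts type changes and is invariant under the swap. At the level of generating functions this yields $B_{ij}(x,y,c) = B_{\bar i \bar j}(x,y,c)$ for all $i, j \in \{0, 1\}$. Plugging this identity into the expression \eqref{eq:S-GF-def2} defining $S_{01}$ and relabelling the summation by $\mu_i = \bar \nu_i$ converts it into the expression defining $S_{10}$. The main technical point is verifying that the petal decomposition is genuinely a bijection respecting the planar embedding and the type alternation, and in particular that the weight factorises (which rests on the non-sharing of white vertices between petals, a consequence of planarity); everything else reduces to a direct check of simple invariants, so no serious obstacle is expected.
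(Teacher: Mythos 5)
Your proof follows essentially the same route as the paper's: establish $A=S_{01}$ via the petal decomposition (noting that the root edge, label $1$, has type $1$, that the edge labeled $2k$ closing the sequence has type $0$, and that successive external edges alternate types because labels of consecutive parity alternate around the black vertex), and then obtain $S_{01}=S_{10}$ by the $0\leftrightarrow1$ type-swap symmetry. You have simply unpacked the paper's one-paragraph argument: the weight factorisation across petals (which rests on each white vertex being confined, with all its incident edges, to a single petal, a consequence of planarity/Jordan separation) and the identity $B_{ij}=B_{\bar\imath\bar\jmath}$ are exactly the points the paper appeals to implicitly and in the subsequent lemma, so nothing in your argument deviates from the paper's reasoning.
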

\begin{proof}
Indeed, using the decomposition in petals one finds that any planar map in $\bM^0$ can be decomposed as a sequence of petals. Given that the type of the root edge is fixed, the decomposition is unique and the last edge of the sequence of petals must have the opposite type ($0$). Thus $A(x,y,c)=S_{01}(x,y,c)$, and by symmetry under the change of type of edge in the generating functions, $A(x,y,c)=S_{10}(x,y,c)$.
\end{proof} 

We now need to write the relations between $B_{ij}(x,y,c)$ and $S_{ij}(x,y,c)$. We have the first set of equations
\begin{proposition}\label{prop:B-equations}
The $B_{ij}(x,y,c)$ satisfy the following relations with the $S_{ij}(x,y,c)$,
\begin{align}
 &B_{aa}(x,y,c) = cx+xS_{\bar{a}\bar{a}}(x,y,c)B_{aa}(x,y,c)+x S_{\bar{a}a}(x,y,c)B_{\bar{a}a}(x,y,c)\\
 &B_{a\bar a}(x,y,c)=xS_{\bar a \bar a}(x,y,c)B_{a \bar a}(x,y,c)+y^2x S_{\bar a a}(x,y,c)B_{\bar a \bar a}(x,y,c).
 \end{align}
\end{proposition}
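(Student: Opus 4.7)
The plan is to prove both identities by a bijective combinatorial decomposition of a petal $\cP \in \bM^{0,\mathfrak{P}}$, organised around the local structure at the white vertex $v$ sitting at the end of the root edge $e$. Write $i$ and $j$ for the types of $e$ and the complement $\bar e$, and denote by $e, \bar e, f_1, \ldots, f_p$ ($p \ge 0$) the counterclockwise cyclic order of edges at $v$; the contribution of $v$ to the weight $\mathrm{wt}(\cP) = \prod_{v_\circ} c\, m^{-\mathrm{alt}(v_\circ)}$ is $c\, m^{-\mathrm{alt}(v)}$, where $\mathrm{alt}(v)$ counts the type-changes in this cyclic sequence.

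I would first treat the base case: the degenerate $1$-edge petal where $v$ has degree one (so that $\bar e = e$). Such a petal contributes weight $c$ and generating factor $x$, which accounts for the constant term $cx$ in the $B_{aa}$ equation. Because root and complement then coincide, this case forces $i = j$, which is exactly why no analogous constant appears in the $B_{a\bar a}$ equation.

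For a non-degenerate petal I would peel one edge, contributing the factor $x$, and encode the remaining combinatorial data as a concatenation of a sequence of sub-petals (a factor $S_{\cdot\cdot}$) and a trailing sub-petal (a factor $B_{\cdot\cdot}$) that carries $\bar e$. The two recursive terms correspond to the two possibilities for the type of the trailing sub-petal's root relative to the type of $e$: either the same type (producing $x\, S_{\bar a \bar a}\, B_{aa}$ in the $B_{aa}$ recursion and $x\, S_{\bar a \bar a}\, B_{a\bar a}$ in the $B_{a\bar a}$ recursion) or the opposite type (producing $x\, S_{\bar a a}\, B_{\bar a a}$ and $y^2 x\, S_{\bar a a}\, B_{\bar a \bar a}$ respectively). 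The index patterns on the $S$ and $B$ factors in each case are forced by the alternation of edge-types around the black vertex, which is inherited from $\sigma_\bullet = (1\,2\cdots 2k)$ together with the definition of $t$, and which pins down the types at every junction between consecutive sub-petals. The extra factor $y^2 = m^{-2}$ appearing only in the last term arises from careful tracking of $\mathrm{alt}(v)$: when $i \ne j$ and the trailing sub-petal's root flips type relative to $e$, the cyclic type sequence at $v$ picks up two additional type-changes at the boundary between the external edges $e, \bar e$ and the $f_k$'s that are not absorbed into the white-vertex weights tracked inside the factors $S$ and $B$; these two changes contribute exactly the factor $m^{-2}$. In the three other cases the corresponding boundary changes either vanish (when $i = j$) or are already accounted for inside the sub-structures.

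The main delicate point will be to verify that the proposed decomposition is a genuine bijection between non-degenerate petals in $\bM^{0,\mathfrak{P}}$ and the set of admissible triples (peeled edge, sequence, trailing sub-petal) with types matching as above, and that the vertex weight at $v$ factorises correctly between $S$, $B$, the peeled edge, and the possible $y^2$ correction. Planarity, used via the Jordan curve theorem in the same way as for the existence of the petal decomposition itself, is what guarantees that the sub-regions cut out by the $f_k$'s are properly nested and that the type-alternation condition at each junction is automatic, so that the generating-function product structure faithfully enumerates petals.
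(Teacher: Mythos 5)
Your proposal is correct and follows essentially the same route as the paper: decompose a petal at the white vertex $v$ carrying both external edges, treat the single-edge petal as the base case (which forces $i=j$ and produces $cx$), and otherwise peel one external edge (factor $x$) and split the remainder into a sequence of sub-petals (a factor $S_{\cdot\cdot}$) and one residual sub-petal (a factor $B_{\cdot\cdot}$), with the $y^2$ coming from the two extra type-changes in $\mathrm{alt}(v)$ created by reinserting the peeled edge between two edges of the opposite type. The only deviation from the paper's proof is a harmless mirror-image convention: you peel $e$ and keep $\bar e$ in the residual sub-petal, whereas the paper peels $e^* = \bar e$ and keeps $e$; thanks to the symmetries $B_{\bar a a}=B_{a\bar a}$, $B_{\bar a\bar a}=B_{aa}$ (and the analogous ones for $S$), both conventions produce the same two recursive equations, and both rely in the same way on planarity to ensure the sub-structures are properly nested petals.
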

\begin{proof}
We have the following graphical decomposition of the petals
 \begin{equation}\label{eq:graphical_rep_1}
 B_{aa}(x,y)=\ \raisebox{-7mm}{\includegraphics[scale=0.75]{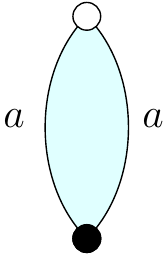}}\ =\raisebox{-7mm}{\includegraphics[scale=0.75]{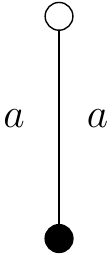}}\ +\ \raisebox{-7mm}{\includegraphics[scale=0.75]{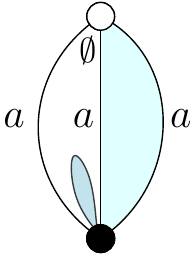}}\ +\ \raisebox{-7mm}{\includegraphics[scale=0.75]{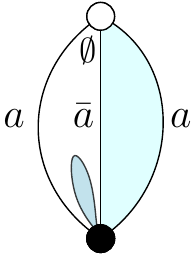}}
 \end{equation}
 \begin{equation}\label{eq:graphical_rep_2}
 B_{a\bar a}(x,y)=\ \raisebox{-7mm}{\includegraphics[scale=0.75]{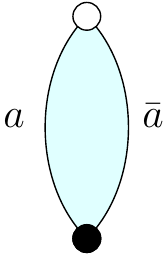}}\ = \ \raisebox{-7mm}{\includegraphics[scale=0.75]{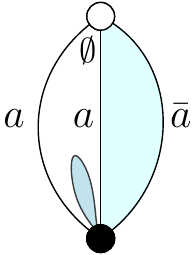}} \ + \ \raisebox{-7mm}{\includegraphics[scale=0.75]{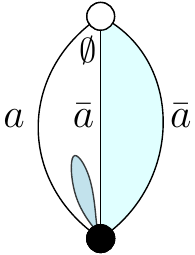}}
 \end{equation}
as indeed if one considers a petal whose external edges are of the same type, it either consists of one single edge, in which case the associated weight is $cx$, one $c$ for the unique white vertex and one $x$ for the unique edge, this leads to the first term of \eqref{eq:graphical_rep_1}. Or it consists of more than one edge in which case one can look at the second external edge $e^*$ and consider its next neighbouring edge $n(e^*)$ when going around the white vertex counterclockwise. If the type (denoted $a$) of $e^*$ is the same than the type of $n(e^*)$, $e^*$ and $n(e^*)$ enclose a non-trivial sequence of petals with starting edge type $\bar a$ and ending edge type $\bar a$, while $n(e^*)$ and $e$ induce a sub-petal whose external edge types are both $a$. This leads to the second term of \eqref{eq:graphical_rep_1}. The last term is obtained when the type of $n(e^*)$ is not the same than the type of $e^*$. In this case $e^*$ and $n(e^*)$ enclose a sequence of petals with starting edge type $a$ and ending edge type $\bar a$. This writes formally in terms of generating functions
\begin{equation}
  B_{aa}(x,y,c) = cx+xS_{\bar{a}\bar{a}}(x,y,c)B_{aa}(x,y,c)+x S_{\bar{a}a}(x,y,c)B_{\bar{a}a}(x,y,c).  
\end{equation}

Now consider a petal whose external edges are of different types. As previously, consider the neighboring edge $n(e^*)$ of the second external edge $e^*$. Either $n(e^*)$ is of the same type than $e^*$, say $a$. In this case the two edges enclose a sequence of petals with starting edge type $\bar a$ and ending edge type $\bar a$. This leads to the first term of \eqref{eq:graphical_rep_2}. The second term is obtained when $n(e^*)$ is of a different type than $e^*$. In this case the sequence of petals between the two edges has different starting and ending edge type, namely $a, \bar a$. Moreover, in this case the edge $e^*$, is located between the two edges of the sub-petal induced by $n(e^*)$ and $e$ and the edge type of $e^*$ is different than the edge type of $n(e^*)$ and $e$. Thus we have to take an additional factor $y^2$ into account. This leads to the following relation between generating functions
\begin{equation}
    B_{a\bar a}(x,y,c)=xS_{\bar a \bar a}(x,y,c)B_{a \bar a}(x,y,c)+y^2x S_{\bar a a}(x,y,c)B_{\bar a \bar a}(x,y,c).
\end{equation}
\end{proof}
The following lemma gives us an expression for $S_{aa}(x,y,c)$ and $S_{\bar a a}(x,y,c)$,
\begin{lemma}\label{lem:S-GF-equations}
After using the symmetries $B_{\bar a a}(x,y,c)=B_{a \bar a}(x,y,c)$, $B_{\bar a \bar a}(x,y,c)= B_{aa}(x,y,c)$ and similar symmetries for the $S$ generating functions, we have the following identities,
\begin{align}
    &S_{\bar a a}(x,y,c)=\frac{1-B_{01}(x,y,c)}{1-2B_{01}(x,y,c)+B_{01}(x,y,c)^2-B_{00}(x,y,c)^2}\\
    &S_{ a a}(x,y,c)=\frac{B_{00}(x,y,c)}{1-2B_{01}(x,y,c)+B_{01}(x,y,c)^2-B_{00}(x,y,c)^2}.
\end{align}
\end{lemma}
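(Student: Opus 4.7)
The plan is to recognize the sums \eqref{eq:S-GF-def1}--\eqref{eq:S-GF-def2} as transfer-matrix geometric series in a suitable $2 \times 2$ matrix. After applying the stated symmetries $B_{00} = B_{11}$ and $B_{01} = B_{10}$, I set $B_0 := B_{00}$ and $B_1 := B_{01}$, so that $B_{ij} = B_0$ if $i = j$ and $B_1$ otherwise, while $B_{\bar\imath j} = B_1$ if $i = j$ and $B_0$ otherwise.

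Introduce the transfer matrix
$$ T = \begin{pmatrix} B_1 & B_0 \\ B_0 & B_1 \end{pmatrix}, \qquad T_{\nu \nu'} = B_{\bar\nu \nu'}, $$
the row vector $v_a = (B_{a 0}, B_{a 1})$ encoding the (unbarred) first petal, and the column vector $w_a = (B_{\bar 0 a}, B_{\bar 1 a})^T$ encoding the (barred) last petal. The inner sum for a fixed $n \geq 1$ is exactly $v_a T^{n-1} w_a$ (resp.\ $v_{\bar a} T^{n-1} w_a$), so summing the geometric series $\sum_{m \ge 0} T^m = (I - T)^{-1}$ collapses the defining sums into
$$ S_{aa} = B_{aa} + v_a (I - T)^{-1} w_a, \qquad S_{\bar a a} = 1 + B_{\bar a a} + v_{\bar a} (I - T)^{-1} w_a. $$

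The remaining work is linear algebra. One computes $\det(I - T) = (1 - B_1)^2 - B_0^2$, which matches the denominator in the claim. Inverting $I - T$ explicitly, evaluating the two matrix-vector products, and combining with $B_{aa}$ (resp.\ $1 + B_{\bar a a}$) over the common denominator, the numerators cancel down to $B_0$ for $S_{aa}$ and $1 - B_1$ for $S_{\bar a a}$, yielding the two asserted identities.

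The only delicate point is bookkeeping: the first factor $B_{a\nu_1}$ in the sums carries no bar, whereas every subsequent factor $B_{\bar \nu_i \nu_{i+1}}$ does. This asymmetry is captured by the identity $v_a^T = w_{\bar a}$, which follows from the $B_{ij}$-symmetries and explains why the two $S$'s share a common denominator but have distinct numerators. Once this bookkeeping is settled, the simplifications are short and mechanical, so I do not expect any substantive obstacle.
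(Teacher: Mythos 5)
Your proof is correct and follows essentially the same transfer-matrix route as the paper: both recognize the sums in equations \eqref{eq:S-GF-def1}--\eqref{eq:S-GF-def2} as a geometric series in a $2\times 2$ matrix with entries $B_{\bar a b}$ and invert $I-T$. The only cosmetic difference is that the paper observes the whole series collapses directly to the matrix entries $S_{aa}=\sum_{n\ge 0}(T^n)_{\bar a a}$ and $S_{\bar a a}=\sum_{n\ge 0}(T^n)_{aa}$ (using $(T)_{\bar a a}=B_{aa}$ to absorb the boundary terms), whereas you keep separate boundary vectors $v_a,w_a$ and add the low-order terms by hand, which is equivalent after simplification.
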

\begin{proof}
We prove this statement using a transfer matrix method. We define the $2\times 2$ matrix $T(x,y,c)$ element-wise such that $(T)_{ab}=B_{\bar a b}$ for $a,b \in\{0,1\}$ 
\begin{equation}
T(x,y,c)=\begin{pmatrix}B_{10}(x,y,c) & B_{11}(x,y,c) \\
B_{00}(x,y,c) & B_{01}(x,y,c)
\end{pmatrix}.
\end{equation}
From equations \eqref{eq:S-GF-def1}, \eqref{eq:S-GF-def2} we note that
\begin{align}
S_{aa}(x,y,c)&=(T(x,y,c))_{\bar a a}+(T(x,y,c)^2)_{\bar a a}+ \ldots\\
&= \sum_{n\ge 0} (T(x,y,c)^n)_{\bar a a}.
\end{align}
and
\begin{align}
S_{\bar a a}(x,y,c)&=1+(T(x,y,c))_{ a a}+(T(x,y,c)^2)_{ a a}+ \ldots \\
&=\sum_{n\ge 0} (T(x,y,c)^n)_{a a}.
\end{align}
These sums can be computed by diagonalizing $T(x,y,c)$. The eigenvalues of $T(x,y,c)$ write
\begin{multline}
\lambda_+=\frac12\Bigl(B_{10}(x,y,c)+B_{01}(x,y,c)\\
+\sqrt{(B_{10}(x,y,c)-B_{01}(x,y,c))^2+4B_{11}(x,y,c)B_{00}(x,y,c)}\Bigr)
\end{multline}
\begin{multline}
\lambda_-=\frac12\Bigl(B_{10}(x,y,c)+B_{01}(x,y,c)\\
-\sqrt{(B_{10}(x,y,c)-B_{01}(x,y,c))^2+4B_{11}(x,y,c)B_{00}(x,y,c)}\Bigr).
\end{multline}
Thanks to the symmetry $B_{\bar a a}(x,y,c)=B_{a \bar a}(x,y,c)$, $B_{\bar a \bar a}(x,y,c)= B_{aa}(x,y,c)$ the expressions of $\lambda_{\pm}(x,y)$ simplify to
\begin{align}
\lambda_{\pm}(x,y)=B_{01}(x,y)\pm B_{00}(x,y).
\end{align}
Moreover the symmetry also leads to a matrix of change of basis $Q$ independent of $x,y$ and $c$, that is we have 
\begin{equation}
Q=\begin{pmatrix}
-1 & 1 \\ 1& 1 
\end{pmatrix},
\end{equation}
thus we obtain
\begin{align}
\sum_{n\ge 0} T^n&= Q\begin{pmatrix} \frac1{1-B_{01}(x,y,c)+B_{00}(x,y,c)} & 0\\ 0 & \frac1{1-B_{01}(x,y,c)-B_{00}(x,y,c)}\end{pmatrix}Q^{-1}\\
&=\begin{pmatrix}\frac{1-B_{01}(x,y,c)}{(1-B_{01}(x,y,c))^2-B_{00}(x,y,c)^2} & \frac{B_{00}(x,y,c)}{(1-B_{01}(x,y,c))^2-B_{00}(x,y,c)^2} \\ \frac{B_{00}(x,y,c)}{(1-B_{01}(x,y,c))^2-B_{00}(x,y,c)^2} & \frac{1-B_{01}(x,y,c)}{(1-B_{01}(x,y,c))^2-B_{00}(x,y,c)^2}\end{pmatrix}
\end{align}
which leads to the result.
\end{proof}
\noindent{\bf Proof of Theorem \ref{thm:main-thm}.} As a consequence of these two results, proposition \ref{prop:B-equations} and lemma \ref{lem:S-GF-equations}, we have the following system of equations on $S_{01}$, $S_{00}$, $B_{01}$, $B_{00}$,
\begin{align}
    &S_{01}(x,y,c)(1-B_{01}(x,y,c)+B_{01}(x,y,c)^2-B_{00}(x,y,c)^2)+B_{01}(x,y,c)-1=0,\nonumber\\
    &S_{00}(x,y,c)(1-B_{01}(x,y,c)+B_{01}(x,y,c)^2-B_{00}(x,y,c)^2)-B_{00}(x,y,c)=0,\nonumber\\
    &cx+xS_{00}(x,y,c)B_{00}(x,y,c)+x S_{01}(x,y,c)B_{01}(x,y,c)-B_{00}(x,y,c)=0,\nonumber\\
    \label{eq:system_summary}&xS_{00}(x,y,c)B_{01}(x,y,c)+y^2x S_{01}(x,y,c)B_{00}(x,y,c)-B_{01}(x,y,c)=0.
\end{align}
Thanks to this system of polynomial equations we can obtain a polynomial equation on $S_{01}(x,y,c)$. Indeed, this system \eqref{eq:system_summary} defines an ideal $I$ in the ring of polynomials in seven variables $x,y,c,S_{00},S_{01},B_{00},B_{01}$, $$I\subseteq\mathbb{C}[x,y,c,S_{00},S_{01},B_{00},B_{01}].$$
We can obtain a polynomial equation for $S_{01}(x,y,c)$ by looking for a Gr\"obner basis for the elimination ideal $I_e=I\cap \mathbb{C}[x,y,c,S_{01}]$; see \cite{cox2013ideals} for Gr\"obner basis definition and properties. This Gr\"obner basis can be computed using a formal computation software. We used Magma \cite{Magma}. We find that $I_e=\left\langle \eta(S_{01},x,y,c)\right\rangle$ where $\eta$ consists of the following polynomial,
\begin{multline}
    \eta(S_{01},x,y,c)=S_{01}^7 \left(x^4 y^4-2 x^4 y^2+x^4\right)+S_{01}^6 \bigl(4 c x^4 y^4-8 c x^4 y^2+4 c x^4-3 x^4 y^4+6 x^4 y^2\\
    -3 x^4\bigr)
    +S_{01}^5 \bigl(6 c^2 x^4 y^4-12 c^2 x^4 y^2+6
   c^2 x^4-9 c x^4 y^4+18 c x^4 y^2-9 c x^4+3 x^4 y^4-6 x^4 y^2+3 x^4-2 x^2 y^2\\-2 x^2\bigr)
   +S_{01}^4 \bigl(4 c^3 x^4 y^4-8 c^3 x^4 y^2+4 c^3 x^4-9 c^2 x^4 y^4+18 c^2 x^4 y^2-9 c^2
   x^4+6 c x^4 y^4
   -12 c x^4 y^2+6 c x^4\\
   -6 c x^2 y^2-6 c x^2-x^4 y^4+2 x^4 y^2-x^4+4 x^2 y^2+4 x^2\bigr)
   +S_{01}^3 \bigl(c^4 x^4 y^4-2 c^4 x^4 y^2+c^4 x^4
   -3 c^3 x^4 y^4\\
   +6 c^3 x^4
   y^2-3 c^3 x^4+3 c^2 x^4 y^4-6 c^2 x^4 y^2+3 c^2 x^4-6 c^2 x^2 y^2-7 c^2 x^2-c x^4 y^4+2 c x^4 y^2-c x^4+9 c x^2 y^2+9 c x^2\\
   -3 x^2 y^2-2 x^2+1\bigr)
   +S_{01}^2 \left(-2 c^3 x^2
   y^2-4 c^3 x^2+6 c^2 x^2 y^2+7 c^2 x^2-5 c x^2 y^2-3 c x^2+2 c+x^2 y^2-1\right)\\
   +S_{01} \left( -c^4x^2+c^3 x^2 y^2+2 c^3 x^2-2 c^2 x^2 y^2-c^2 x^2+c^2+c x^2 y^2-2
   c\right)-c^2
\end{multline}
The solutions $S_{01}(x,y,c)$ of the equation $\eta(S_{01},x,y,c)=0$ consists of the projection of the solutions for fixed values of $x,y,c$ to the system \eqref{eq:system_summary} onto the $\bC$-plane in $\bC^7$ corresponding to the variable $S_{01}$. We refer to \cite[Chapter 2 \& Chapter 3]{cox2013ideals} for explanation of this method.\\
Note that the polynomial $\eta$ seen as a univariate polynomial of $S_{0,1}$ (when fixing the values of $x,y,c$) is divisible by the polynomial $(c+S_{0,1})$. Consequently $S_{0,1}=-c$ is a solution. However this is not an interesting solution for us as it would lead to a resolvent $W(z)=-c/z$, that is, all moments of order greater than zero vanish and the moment of order zero is negative. We can factor this uninteresting solution and find a degree $6$ polynomial on $S_{01}$. In order to find the polynomial equation of Theorem \ref{thm:main-thm} satisfied by the resolvent we then need to perform the change of variables $x\rightarrow1/\sqrt{z}$ and $S_{01}(1/\sqrt{z},y,c)\rightarrow z W(z)$. Indeed, we know from proposition \ref{prop:GFmoments-to-petal-sequences} that the generating function $A(x,y,c)$ of planar maps in $\bM^0$ is simply $S_{01}(x,y,c)$. Moreover,  the variable $1/z$ counts pairs of edges of alternating type while the variable $x$ counts edges. This explains the square root.\\
To avoid too cumbersome expression we also keep the dependence of $W(z)$ in $y,c$ implicit, thus showing Theorem \ref{thm:main-thm}. \qed \\

Despite the apparent complexity of the algebraic equation \eqref{eq:W-equation}, it is possible to give explicit solutions in terms of radicals. We do not present them here but they can be obtained \textit{via} symbolic computation softwares. However, since there are six solutions we need to select the solution which is the generating function of moments of our matrix $P$. This is done by requiring that the correct solution should be analytic at $z=\infty$ and requiring that the first coefficient of its expansion at infinity is $1$. \\
\begin{remark}
Note that, due to the result of Theorem \ref{thm:freeness-thm}, we expect that in the limit $y\rightarrow 0$, $c\rightarrow 1$, equation \eqref{eq:W-equation} reduces to 
\begin{equation}\label{eq:Fuss-Catalan-case}
    z^2W(z)^3-z W(z)+1=0.
\end{equation}
It turns out that this is not the case. In this limit, the equation \eqref{eq:W-equation} simplifies into the polynomial equation
\begin{equation}
   W(z)^6 z^4-2 W(z)^4 z^3+W(z)^2 z^2-1=0.
\end{equation}
which in turns factors into
\begin{equation}\label{eq:curve-splitting-y=0}
    \left(W(z)^3 z^2-W(z) z-1\right) \left(W(z)^3 z^2-W(z) z+1\right)=0.
\end{equation}
Only the second factor relates to the equation \eqref{eq:Fuss-Catalan-case}. This non-trivial behavior is due to the fact that equation \eqref{eq:W-equation} defines a curve of generic genus two. In the limit $y\rightarrow 0$, $c\rightarrow 1$, the curve degenerates into two non-trivial connected components of genus zero. Only one of these components is a Fuss-Catalan curve defined by the polynomial equation \eqref{eq:Fuss-Catalan-case}. Consequently we need to select the second factor of \eqref{eq:curve-splitting-y=0}. Note in particular that sole the second factor has a solution analytic at $z=\infty$.
\end{remark}
Note also that the position of the ramification points in $z$ of the solutions to the polynomial equation \eqref{eq:W-equation} can be computed exactly in terms of $y$ and $c$. This allows us to infer the support of the corresponding eigenvalue density. However their expressions being quite involved we do not display them here.  \\

In principle, we can obtain the exact expression of the density by using the inverse Stieltjes transform formula. However the direct computation of the limit seems intractable. One can also obtain a polynomial equation on the density from the polynomial equation \eqref{eq:W-equation} on the resolvent using a technique described for instance in \cite{dartois2019schwinger} and polarization formula. Unfortunately the polynomial equation on the density thus obtained is of very high order and it does not seem possible to obtain the relevant root\footnote{that is the root which leads to a positive normalized density} exactly in terms of radicals.

\bibliographystyle{alpha}
\bibliography{Typical-RMT-biblio}
\end{document}